\newtheorem{theorem}{Theorem}
 \newtheorem{lemma}{Lemma}
 \newtheorem{definition}{Definition}
  \newtheorem{remark}{Remark}
\begin{document}
%

\title{On Designs of Full Diversity Space-Time Block Codes   for Two-User MIMO Interference Channels}

\author{Long~Shi,~\IEEEmembership{Student Member,~IEEE}, ~Wei~Zhang,~\IEEEmembership{Senior Member,~IEEE},~
        and ~Xiang-Gen~Xia,~\IEEEmembership{Fellow,~IEEE}
%

\thanks{L. Shi and W. Zhang are with School of Electrical Engineering and Telecommunications,   The University of New South Wales,
  Sydney, Australia (e-mail: long.shi@student.unsw.edu.au; wzhang@ee.unsw.edu.au). Their work was supported in part by the Australian Research Council Discovery Project DP1094194. Part of this work was
presented at the IEEE Global Communications Conference
(GLOBECOM) -- Broadband Wireless Access Workshop, Anaheim, CA, USA, Dec. 3, 2012.}
   \thanks{X.-G. Xia is with Department of Electrical and Computer Engineering, University of Delaware, DE 19716, USA (e-mail: xxia@ee.udel.edu), also with the Chonbuk National University, Jeonju, South Korea. His work was supported in part by the National Science Foundation (NSF)
under Grant CCF-0964500, the Air Force Office of Scientific Research (AFOSR) under Grant FA9550-12-1-0055, and the World Class University (WCU) Program, National Research Foundation, South Korea.
}
}
\maketitle


\begin{abstract}
In this paper, a design criterion for space-time block codes (STBC) is proposed for two-user MIMO interference channels  when a group zero-forcing (ZF) algorithm is applied at each receiver to eliminate the inter-user interference. Based on the design criterion, a design of STBC for two-user interference channels is proposed that can achieve full diversity for each user with the group ZF receiver. The code rate approaches one  when the time delay in the encoding (or code block size) gets large.    Performance results demonstrate that the full diversity can be guaranteed by our proposed STBC with the group ZF receiver.


\end{abstract}

\begin{IEEEkeywords}
 Two-user MIMO interference channels,  space-time block codes, full diversity, group ZF receiver.
\end{IEEEkeywords}

\section{Introduction}

 It is undoubted that how to maximize diversity  gain  is one of the major concerns in both point-to-point communication systems and wireless networks.  As for point-to-point communication systems, remarkable progress has been witnessed in the field of  STBC designs to achieve maximum diversity gain over the last decade. In particular, some pioneering works are well known of the full-diversity criteria based on different types of receivers such as maximum likelihood (ML) and linear receivers \cite{tar,liu1,liu2,shang,guo,guocor}.

 Recently, the contributions aforementioned  shed  light on the STBC designs   for wireless networks such as multiple access channels (MACs) \cite{nag,song,kaz09, quan,feng1,feng2} and X channels \cite{liang}.
For two-user MACs,  various cancellation schemes  in  \cite{nag,song,kaz09,quan} were considered with the help of Alamouti coding to suppress the interference from the neighboring user. In \cite{nag}, a minimum mean squared error (MMSE) interference cancellation was developed  when   two receive antennas are equipped at each terminal. Later, a family of   interference cancellation schemes based on Bayesian analysis was proposed in \cite{song}. Then,  \cite{kaz09} analyzed the diversity order of Alamouti and quasi-orthogonal STBC for multiuser MAC with two decoding algorithms, i.e., joint ML decoding and group ZF decoding. It showed that for joint ML decoding full diversity is obtained and for the group ZF decoding full diversity cannot be achieved by Alamouti and quasi-orthogonal STBC.  In order to
obtain an optimal power gain, the recent work in \cite{quan} employed  ZF receiver to eliminate the interference.
  Unfortunately, all the schemes in  \cite{nag,quan,kaz09,song}   cannot guarantee   the maximum possible diversity order after interference cancellation. To achieve full diversity in MACs, the works in \cite{feng1}\cite{feng2}   developed a particular precoder  for each user with arbitrary number of transmit antenna. Interference alignment was firstly proposed to suppress the interference and achieve the maximum degree of freedom (DoF) in MIMO interference and X channels \cite{jafar1, jafar2, jafar3}.
For two-user X channels,  a  full-diversity transmission scheme in \cite{liang} is proposed by borrowing an idea from interference alignment in \cite{jafar1}. Different from the work in \cite{jafar1}, the goal of \cite{liang} is to achieve the maximum  diversity order instead of the maximum DoF in   X channels.  Note that  the inter-user interference was safely eliminated  in \cite{ feng1, feng2,  liang}, since it can be aligned into an orthogonal subspace to the desired signals with the aid of full channel state information at the transmitters (CSIT). By doing so, interference cancellation is realized   without any side impact on the full diversity   of the desired user. In practice, however, it is very hard for the transmitters to fully know the channel information. Therefore, it is  natural to consider   a full-diversity STBC design for wireless networks without CSIT.

In this paper, we propose a  design criterion for an STBC to achieve full diversity  in two-user MIMO interference channels without  CSIT, where the full diversity is defined for a single user pair, i.e., the link between a transmitter  and its corresponding receiver. Since the receiver corresponding to the desired user knows nothing about the codebook of the interfering user,    a straightforward way  to eliminate the interference  is using  the group ZF receiver, which was widely investigated and applied in space-time coded MIMO systems \cite{guo}\cite{guocor} \cite{wei, shi, xucon}.   Based on the  design criterion, we propose a systematic STBC design for each user with multiple transmit antennas that can obtain full diversity after interference cancellation. For a fixed number of transmit antennas $M$, the code rate  approaches one  when the time delay in the encoding (or code block size) gets large.  
  In our paper, we focus on two-user interference channels where the codebooks of two users are not shared. Hence, the joint ML decoding cannot be used.   Compared to the work in \cite{nag,quan,kaz09,song}, our  contribution lies in the design of full diversity STBC for interference channels with the group ZF decoding.
It is worthwhile to mention that the main difference  from the works in  \cite{ feng1, feng2,  liang} is that the channel information is unknown at the transmitter side  such that in our paper the inter-user interference is canceled by using the group ZF receiver instead of interference alignment. To our best knowledge, this is the first attempt of full diversity STBC
design for two-user MIMO interference channels  without CSIT.

The rest of the paper is organized as follows. The system model  is outlined in Section II.  Then, a design criterion of STBC under the group ZF receiver is given  in Section III. In Section IV, a systematic STBC design is proposed for each user  and  the full diversity is proved when   the group ZF receiver is utilized.  Simulation results are presented  in Section V. Finally, we conclude the paper.

\emph{Notations}:  Superscripts  ${}^{\mathcal{T}}$ and ${}^\dag$ stand for transpose and conjugate transpose, respectively.  $\mathbb{C}$ denotes the  complex number field.
$\mathbf{I}_n$ denotes an $n\times n$ identity matrix and  $\mathbf{0}_{m\times n}$ denotes an $m\times n$ matrix  whose elements are all $0$.    ${\rm diag}\{\mathbf{v}\}$ produces a diagonal matrix with  entries of a vector $\mathbf{v}$ in main diagonal. ${\rm span}\{\mathbf{v}_1, \mathbf{v}_2, \ldots, \mathbf{v}_M\}$ denotes a subspace spanned by all the vectors of $\{\mathbf{v}_1, \mathbf{v}_2, \ldots, \mathbf{v}_M\}$. $E\{\}$ denotes the statistical expectation.  $ \otimes$ denotes Kronecker product.  $\| \ \| $   represents the Frobenius norm.  ${\rm tr \{\mathbf{A}\}}$ denotes the trace of a square matrix $\mathbf{A}$. $\min\{\mathbf{v}\}$ denotes the minimum value in a vector $\mathbf{v}$.

\section{System Model }

Fig. \ref{fig:1} shows  a two-user interference channel composed of $2$ transmitters (i.e.,  {user 1} and  {user 2}) and $2$ receivers (i.e.,  {R1} and  {R2}), where each user is equipped with $M$ transmit antennas and its corresponding receiver has $N$ receive antennas.

 \begin{figure}[t]
 \centering
        \includegraphics[width=0.85\columnwidth]{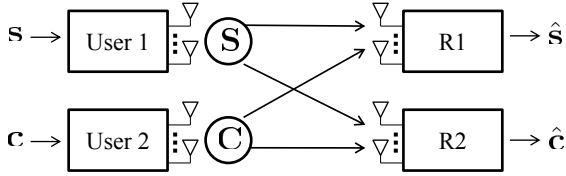}
       \caption{System model of two-user MIMO interference channel.}
        \label{fig:1}
\end{figure}

At the transmitter side, two symbol vectors $\mathbf{s}=\left[s_1, s_2,\ldots,s_L\right]^\mathcal{T}$ and $\mathbf{c}=\left[c_1, c_2,\ldots, c_L\right]^\mathcal{T}$ are firstly encoded to   STBCs   $\mathbf{S} \in \mathbb{C}^{T\times M}$ and $\mathbf{C}\in \mathbb{C}^{T\times M}$, respectively. The independent symbols $s_l$ and $c_l$ for $l=1,2,\ldots,L$ are selected from two independent codebooks $\mathcal{A}_1$ and $\mathcal{A}_2$ such as quadrature amplitude modulation (QAM). Then, the codewords $\mathbf{S}$ and $\mathbf{C}$ are transmitted from user 1 for R1 and user 2 for R2 over $T$ time slots, respectively.

The received signals $\mathbf{Y}_1 \in \mathbb{C}^{T\times N}$ and $\mathbf{Y}_2\in \mathbb{C}^{T\times N}$ at R1 and R2 are given by, respectively
\begin{eqnarray}\label{Y1ic}
\mathbf{Y}_1&=&\sqrt\frac{\rho}{\mu}\mathbf{S}\mathbf{H}_1+
\sqrt\frac{\rho}{\mu}\mathbf{C}\mathbf{G}_1+\mathbf{N}_1\\
\label{Y2ic} {\rm and}~ \mathbf{Y}_2&=&\sqrt\frac{\rho}{\mu}\mathbf{S}\mathbf{H}_2+
\sqrt\frac{\rho}{\mu}\mathbf{C}\mathbf{G}_2+\mathbf{N}_2,
    \end{eqnarray}
where    the matrices $\mathbf{H}_r$ and $\mathbf{G}_r$  stand for the channel matrices from user 1 to R$r$ and user 2 to R$r$ for $r=1,2$, respectively.  Moreover, the   entries of the channel matrices $\mathbf{H}_r \in \mathbb{C}^{M\times N}$ and $\mathbf{G}_r \in \mathbb{C}^{M\times N}$ for $r=1,2$ are assumed to be independently and identically distributed (i.i.d.) complex Gaussian with zero mean and unit variance, i.e., $\mathcal{CN}(0,1)$, and the channels are assumed to experience the quasi-static fading.
$\mathbf{N}_r \in \mathbb{C}^{T\times N}$  is the noise matrix at R$r$ for $r=1,2$, whose elements are also i.i.d.   $\mathcal{CN}(0,1)$.  Additionally, $\rho$ is the average signal-to-noise ratio (SNR) per receive antenna at each receiver and   $\mu$
is the normalization factor to ensure that the average energy of the coded symbols transmitting from all antennas of each user during one symbol period is 1.

Throughout this paper, there are two assumptions  as follows
  \begin{enumerate}
    \item The channel matrices $\mathbf{H}_1$ and $\mathbf{G}_1$ are perfectly known  at R1, and $\mathbf{H}_2$ and $\mathbf{G}_2$ are known at R2. However, no channel information is known at any transmitter.
    \item The codebook of each user is only known at its respective receiver.
  \end{enumerate}

 To decode the desired information at each receiver,  (\ref{Y1ic}) and (\ref{Y2ic}) can be rewritten as the equivalent forms by using some linear transformations, respectively
\begin{eqnarray}\label{Y1ice}
\mathbf{y}_1&=&\sqrt\frac{\rho}{\mu}{\bm{\mathcal{H}}}_1 {\mathbf{s}}
+\sqrt\frac{\rho}{\mu}\bm{\mathcal{G}}_1 {\mathbf{c}}+\mathbf{n}_1\\
\label{Y2ice} {\rm and}~ \mathbf{y}_2&=&\sqrt\frac{\rho}{\mu}\bm{\mathcal{H}}_2 {\mathbf{s}}
+\sqrt\frac{\rho}{\mu}\bm{\mathcal{G}}_2 {\mathbf{c}}+\mathbf{n}_2,
    \end{eqnarray}
 where $\mathbf{y}_r \in \mathbb{C}^{TN\times 1}$  is the equivalent form of $\mathbf{Y}_r$ for $r=1,2$. Likewise, $\mathbf{N}_r$  is equivalently transformed into $\mathbf{n}_r$ for $r=1,2$. Moreover,  the equivalent channel matrices are given by $\bm{\mathcal{H}}_r\in\mathbb{C}^{TN\times L}$ and $\bm{\mathcal{G}}_r \in\mathbb{C}^{TN\times L}$   for $r=1,2$,  both of which   are determined by the code matrices  $\mathbf{S}$ and $\mathbf{C}$  and the original channel matrices $\mathbf{H}_r$ and $\mathbf{G}_r$, respectively.

From now on, we focus on the interference cancellation and decoding at R1. The process at R2 follows similarly and is omitted.

 As we mentioned,   the desired information at R1 is  interfered by the  information of user 2.  The group ZF receiver is used  to cancel the interference and decode the desired information of the respective user.
Firstly, the cancellation matrix of  the group ZF receiver at R1 is written as
 \begin{eqnarray}\label{Qcan}
\mathbf{Q}_1=\mathbf{I}_{TN}-\bm{\mathcal{G}}_1( {\bm{\mathcal{G}}_1}^\dag\bm{\mathcal{G}}_1)^{-1}{\bm{\mathcal{G}}_1}^\dag,
    \end{eqnarray}
where $\mathbf{Q}_1$ is the projection matrix that projects a vector in $\mathbb{C}$ on to the orthogonal complementary subspace of $\mathbb{V}_2$ spanned by the column vectors of $\bm{\mathcal{G}}_1$, i.e.,
\begin{eqnarray}\label{v2}
 \mathbb{V}_2={\rm span} \{\mathcal{G}_1^1, \ \mathcal{G}_1^2, \ \ldots, \ \mathcal{G}_1^L \},
    \end{eqnarray}
    where   $\mathcal{G}_1^l$ is the $l$th column of $\bm{\mathcal{G}}_1$ for $l=1,2,\ldots, L$.

 Then,   multiplying $\mathbf{y}_1$ by $\mathbf{Q}_1$ to the left, we obtain that
    \begin{eqnarray}\label{Yc} \mathbf{Q}_1\mathbf{y}_1=\sqrt\frac{\rho}{\mu}\mathbf{Q}_1\bm{\mathcal{H}}_1 {\mathbf{s}} +\mathbf{Q}_1\mathbf{n}_1,
    \end{eqnarray}
such that the interference from user 2 is completely removed, i.e., $\mathbf{Q}_1\bm{\mathcal{G}}_1=\mathbf{0}$.

After   interference cancellation,  ML decoding is utilized to decode $ {\mathbf{s}}$ of user 1 at R1, and the  decision  metric is given by
 \begin{eqnarray}\label{icml}
 \hat{ {\mathbf{s}}} =\arg\min_{\mathbf{s} \in\mathcal{A}_1^{L}}\left\|\mathbf{Q}_1\mathbf{y}_{1}-\sqrt{\frac{\rho}{\mu}}\mathbf{Q}_{1}  \bm{\mathcal{H}}_1 { \mathbf{s}} \right\|^2.
    \end{eqnarray}
The ML metric can be used in (\ref{icml}) since the  noise term $\mathbf{Q}_1\mathbf{n}_1$ after interference cancellation is proved to be a degenerated Gaussian white noise in  \cite[\emph{Lemma 2}]{guo}.

\begin{definition} \label{def1}
\emph{(Full diversity in two-user  MIMO interference channel)}
Consider a transmission scheme for two-user MIMO interference channel in (\ref{Y1ice}), where each user has $M$ transmit antennas and its respective receiver has $N$ receive antennas.  User 1 is said to obtain full diversity of order $MN$ if its average pairwise error probability (PEP) $\mathcal{P}(\Delta \mathbf{s})$ decays like  $\rho^{-MN}$, i.e.,
\begin{eqnarray}\label{fulpep}
\mathcal{P}(\Delta \mathbf{s})\leq c \cdot \rho^{-MN},
\end{eqnarray}
where $c$ is a constant   and an error vector is defined as $\Delta  {\mathbf{s}}= {\mathbf{s}}-\bar{ {\mathbf{s}}}$ and $ {\mathbf{s}}$ is erroneously decoded as $\bar{ {\mathbf{s}}}$.
From (\ref{fulpep}), the full diversity  of  user 1 is defined  when  user 1  only considers the link between the corresponding receiver and itself.
\end{definition}

\begin{remark} [Difference from Guo-Xia's work\cite{guo}\cite{guocor}]\label{re1}
In \cite{guo}\cite{guocor},   the point-to-point MIMO systems are considered.   From the equivalent channel model $\mathbf{y}=\bm{\mathcal{H}}\mathbf{s}+\mathbf{n}$, the received signal is  divided into $K$ groups which results in
\begin{eqnarray}
\nonumber \mathbf{y}= {\bm{\mathcal{H}}_{1}} \mathbf{s}_{1}+{\bm{\mathcal{H}}_{2}} \mathbf{s}_{2}+\ldots+{\bm{\mathcal{H}}_{K}} \mathbf{s}_{K}+\mathbf{n},
 \end{eqnarray}
where $\mathbf{s}_k$ denotes the   symbol vector in the $k$th group corresponding to the channel matrix $\bm{\mathcal{H}}_k$ for $k=1,2,\ldots,K$.
  Suppose we want to decode $\mathbf{s}_1$.  The group ZF receiver is used to cancel the interference from all the other groups $\{\mathbf{s}_{2} \ \mathbf{s}_{3} \ \ldots \ \mathbf{s}_{K} \}$.  Note that the channel matrix $\bm{\mathcal{H}}_{1}$ in
 the desired group  and the channel matrices $\{\bm{\mathcal{H}}_{2} \ \bm{\mathcal{H}}_{3} \ \ldots \ \bm{\mathcal{H}}_{K} \}$ in the interfering groups  may be dependent on each other.


 In our work,  the interference from a different and independent transmitter (i.e., user 2) is eliminated by the group ZF receiver, thereby the channel matrices $\bm{\mathcal{G}}_{1}$ and $\bm{\mathcal{H}}_{1}$ are independent of each other.
\end{remark}

\begin{remark}[Independence between $\mathbf{Q}_1$ and $\bm{\mathcal{H}}_1$]\label{re2}
In Guo-Xia's work \cite{guo}\cite{guocor}, the cancellation matrix $\mathbf{Q}_1$ is constructed by   the channel matrices   $\{\bm{\mathcal{H}}_{2} \ \bm{\mathcal{H}}_{3} \ \ldots \ \bm{\mathcal{H}}_{K} \}$  in the interfering groups, which may be dependent on the desired group $\bm{\mathcal{H}}_1$ since all the elements in $\mathbf{Q}_1$  and $\bm{\mathcal{H}}_1$ are the linear combinations of $h_{i,j}$  and $h^{*}_{i,j}$ with $i=1,2,\ldots, M$ and  $j=1,2,\ldots, N$.

In our work, $\mathbf{Q}_1$ and $\bm{\mathcal{H}}_1$ in (\ref{Yc})  are independent, since $\mathbf{Q}_1$ in (\ref{Qcan}) is  derived from   $\bm{\mathcal{G}}_1$ and  $\bm{\mathcal{G}}_1$ is independent of $\bm{\mathcal{H}}_1$.
\end{remark}

\section{Code Design Criterion }
In this section, we propose a   design criterion for user 1 to achieve full diversity   with the group ZF receiver. Firstly, let us introduce the following lemma.


\begin{lemma}\label{thpic}

Consider   a transmission scheme for two-user MIMO interference channel   in  (\ref{Y1ice}). By using the group ZF receiver in (\ref{Qcan}), user 1 achieves  full diversity of order $MN$   if the following inequality can be satisfied,
\begin{eqnarray}\label{QHpic}
\| \mathbf{Q}_1\bm{\mathcal{H}}_1  \Delta  {\mathbf{s}} \|^2 \geq \alpha \sum^M_{i=1}\sum^N_{j=1} |\{h_1\}_{i,j}|^2,\  \forall \Delta   {\mathbf{s}} \in \Delta \mathcal{A}_1^L,
    \end{eqnarray}
 where  an error vector is defined as $\Delta  {\mathbf{s}}= {\mathbf{s}}-\bar{ {\mathbf{s}}}$ and $ {\mathbf{s}}$ is erroneously decoded as $\bar{ {\mathbf{s}}}$. Moreover,   $\{h_r\}_{i,j}$ and $\{g_r\}_{i,j}$  with $i=1,2,\ldots, M$ and  $j=1,2,\ldots, N$   are the $\{i,j\}$th entry of $\mathbf{H}_r$ and $\mathbf{G}_r$ for $r=1,2$, respectively.  $\alpha$ is a positive constant independent of $\{h_1\}_{i,j}$.
\end{lemma}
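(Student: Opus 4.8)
The plan is to carry out a standard pairwise-error-probability (PEP) analysis of the ML metric in (\ref{icml}), exploiting the fact that hypothesis (\ref{QHpic}) lower-bounds the effective distance by a quantity depending only on the entries of $\mathbf{H}_1$. First I would fix a channel realization $(\mathbf{H}_1,\mathbf{G}_1)$ and evaluate the conditional probability that the decoder prefers $\bar{\mathbf{s}}$ to the transmitted $\mathbf{s}$. Since, by \cite[\emph{Lemma 2}]{guo}, the post-cancellation noise $\mathbf{Q}_1\mathbf{n}_1$ is a degenerated white Gaussian noise supported on the range of the projection $\mathbf{Q}_1$, and since the signal-difference vector $\mathbf{Q}_1\bm{\mathcal{H}}_1\Delta\mathbf{s}$ lies in that same range, the conditional PEP takes the Gaussian-tail form
\[
P(\mathbf{s}\to\bar{\mathbf{s}}\mid\mathbf{H}_1,\mathbf{G}_1)=Q\!\left(\sqrt{\tfrac{\rho}{2\mu}\,\|\mathbf{Q}_1\bm{\mathcal{H}}_1\Delta\mathbf{s}\|^2}\,\right).
\]
Applying the Chernoff bound $Q(x)\le\tfrac12 e^{-x^2/2}$ then gives $P(\mathbf{s}\to\bar{\mathbf{s}}\mid\mathbf{H}_1,\mathbf{G}_1)\le\tfrac12\exp\!\big(-\tfrac{\rho}{4\mu}\|\mathbf{Q}_1\bm{\mathcal{H}}_1\Delta\mathbf{s}\|^2\big)$.

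Next I would substitute hypothesis (\ref{QHpic}) directly into this exponent. Because (\ref{QHpic}) holds for every realization and every $\Delta\mathbf{s}$, the resulting right-hand side no longer involves $\mathbf{Q}_1$ or $\bm{\mathcal{H}}_1$, namely
\[
P(\mathbf{s}\to\bar{\mathbf{s}}\mid\mathbf{H}_1,\mathbf{G}_1)\le\tfrac12\exp\!\Big(-\tfrac{\rho\alpha}{4\mu}\sum_{i=1}^M\sum_{j=1}^N|\{h_1\}_{i,j}|^2\Big).
\]
The decisive point, consistent with Remark \ref{re2}, is that the surviving exponent depends only on the $MN$ i.i.d.\ $\mathcal{CN}(0,1)$ entries of $\mathbf{H}_1$, so that averaging the bound factorizes over these entries. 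Using that each $|\{h_1\}_{i,j}|^2$ is exponentially distributed with unit mean, with moment generating function $E[e^{-t|\{h_1\}_{i,j}|^2}]=(1+t)^{-1}$, the average PEP obeys
\[
\mathcal{P}(\Delta\mathbf{s})\le\tfrac12\prod_{i=1}^M\prod_{j=1}^N E\!\Big[e^{-\frac{\rho\alpha}{4\mu}|\{h_1\}_{i,j}|^2}\Big]=\tfrac12\Big(1+\tfrac{\rho\alpha}{4\mu}\Big)^{-MN}.
\]
Letting $\rho$ grow, $\big(1+\tfrac{\rho\alpha}{4\mu}\big)^{-MN}\le\big(\tfrac{\rho\alpha}{4\mu}\big)^{-MN}$, so $\mathcal{P}(\Delta\mathbf{s})\le c\,\rho^{-MN}$ with $c=\tfrac12(4\mu/\alpha)^{MN}$, which is precisely the full-diversity condition (\ref{fulpep}) of Definition \ref{def1}.

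I expect the only delicate step to be the first one: justifying the Gaussian-tail form of the conditional PEP despite the rank deficiency of the projection $\mathbf{Q}_1$. This requires invoking \cite[\emph{Lemma 2}]{guo} to confirm that $\mathbf{Q}_1\mathbf{n}_1$, though degenerate, is white within the range of $\mathbf{Q}_1$, and that $\mathbf{Q}_1\bm{\mathcal{H}}_1\Delta\mathbf{s}$ lies in that same subspace, so that the effective noise seen along the signal direction carries the correct variance and the exponent is normalized properly. Once that normalization is settled, the remainder is the routine moment-generating-function computation shown above; the independence of $\mathbf{Q}_1$ from $\bm{\mathcal{H}}_1$ emphasized in Remark \ref{re2} ensures that no residual statistical coupling obstructs the factorization of the expectation into $MN$ identical terms.
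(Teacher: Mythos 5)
Your proposal is correct and follows essentially the same route as the paper's own proof: write the conditional PEP in $Q$-function form, apply the Chernoff bound, substitute the hypothesis (\ref{QHpic}), and average over the i.i.d.\ entries of $\mathbf{H}_1$ via the exponential moment generating function $E\{e^{-a|h|^2}\}=(1+a)^{-1}$, the expectation over $\mathbf{G}_1$ being harmless since the resulting bound no longer depends on it. The only differences are cosmetic — you apply the Chernoff bound before substituting (\ref{QHpic}) rather than after, and you carry the more standard $\rho/(2\mu)$ normalization inside the $Q$-function where the paper writes $\rho/\mu$ — neither of which affects the $\rho^{-MN}$ conclusion.
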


\begin{proof}

For convenience, $\{h_1\}_{i,j}$   and $\{g_1\}_{i,j}$  are simplified to be $h_{i,j}$ and $g_{i,j}$, respectively. In addition, we let $\mathbf{h}_j=\left[h_{1,j}, h_{2,j}, \ldots, h_{M,j}\right]^\mathcal{T}$ and $\mathbf{g}_j=\left[g_{1,j}, g_{2,j}, \ldots, g_{M,j}\right]^\mathcal{T}$.  Moreover, we define $\mathbf{h}=\left[\mathbf{h}^\mathcal{T}_1 \ \mathbf{h}^\mathcal{T}_2 \ \ldots \ \mathbf{h}^\mathcal{T}_N\right]^\mathcal{T}$ and   $\mathbf{g}=\left[\mathbf{g}^\mathcal{T}_1 \ \mathbf{g}^\mathcal{T}_2 \ \ldots \ \mathbf{g}^\mathcal{T}_N\right]^\mathcal{T}$.

Firstly, we evaluate the conditional  PEP $\mathcal{P}_{\{\mathbf{h},\mathbf{g}\}}( \Delta {\mathbf{s}} )$ of user 1 for two given  vectors $\mathbf{h}$ and $\mathbf{g}$. From (\ref{QHpic}), we obtain that
 \begin{eqnarray}\label{cpepth1}
\nonumber \mathcal{P}_{\{\mathbf{h},\mathbf{g}\}}( \Delta  {\mathbf{s}} )&=& Q\left(\sqrt{\frac{\rho}{\mu}\|\mathbf{Q}_1\bm{\mathcal{H}}_1  \Delta {\mathbf{s}}\|^2} \right)\\
\nonumber &\leq&  Q\left(\sqrt {{\alpha}\frac{\rho} {\mu} \sum^M_{i=1}\sum^N_{j=1} |h_{i,j}|^2 }\right)\\
 &\leq&\frac{1}{2}\exp\left(-\frac{\alpha\rho}{2\mu} \sum^M_{i=1}\sum^N_{j=1} |h_{i,j}|^2 \right),
  \end{eqnarray}
 where $Q(\cdot)$ denotes   $Q$-function and the last inequality is obtained since $Q(x)\leq \frac{1}{2}\exp \left(-\frac{x^2}{2}\right)$.

  By taking average over $\mathbf{h}$ in (\ref{cpepth1}),  we further have that
 \begin{eqnarray}
\nonumber \mathcal{P}_{\{ \mathbf{g}\}}  ( \Delta  {\mathbf{s}})&=& E_{\mathbf{h}}\{ \mathcal{P}_{\{\mathbf{h},\mathbf{g}\}} \left( \Delta {\mathbf{s}} \right)\}\\
 \nonumber &\leq& E_{\mathbf{h}}\bigg\{\frac{1}{2}\exp\left(-\frac{\alpha\rho}{2\mu} \sum^M_{i=1}\sum^N_{j=1} |h_{i,j}|^2 \right)\bigg\}\\
  \label{apep} &=& \frac{1}{2} \left( \frac{2\mu}{2\mu+ \alpha \rho}   \right)^{MN},
  \end{eqnarray}
 where the last equality holds since
  \begin{eqnarray}
\nonumber  E_{h} \{   \exp(-a |h|^2)   \}= \frac{1}{1+a}, \ h\sim \mathcal{CN}(0,1)  \ {\rm and} \ a> 0.
  \end{eqnarray}

Furthermore, when $\rho$ approaches infinite, the conditional PEP of user 1 for the given $\mathbf{g}$ can be upper bounded by
 \begin{eqnarray} \label{apep1th1}
  \mathcal{P}_{\{\mathbf{g}\}}( \Delta  {\mathbf{s}} )
  &\leq&  \frac{2^{MN-1}\mu^{MN}}{\alpha^{MN}  }   \rho^{-MN}.
  \end{eqnarray}

 Next, we take expectation over $\mathbf{g}$ in (\ref{apep1th1}), which has no impact on the power of $\rho$ (see \emph{Remark \ref{re1}}). Therefore,   the average PEP   $\mathcal{P} (  \Delta  {\mathbf{s}} )$ is upper bounded by
 \begin{eqnarray} \label{apep2th1}
\nonumber \mathcal{P} ( \Delta  {\mathbf{s}})
  &\leq&  \sigma  \rho^{-MN},
  \end{eqnarray}
 where $\sigma$ is a positive constant independent of $\mathbf{h}$. According to \emph{Definition \ref{def1}}, we can conclude that user 1 achieves full diversity of order $MN$  after interference cancellation.


Consequently, we  complete the proof of \emph{Lemma 1}.
\end{proof}

In what follows, we propose a design criterion for user 1 to satisfy the inequality  (\ref{QHpic}) with an  STBC  in two-user MIMO interference channels with the group ZF receiver.

\begin{theorem}\label{cri1}
Consider a transmission scheme for two-user MIMO interference channel in (\ref{Y1ice}) where each user has $M$ transmit antennas and the receiver has $N$ receive antennas. At the receiver R1, the group ZF receiver is used  to cancel the interference from user 2 with the STBC $\mathbf{C}$. After interference cancellation of $\mathbf{Q}_1\in \mathbb{C}^{TN \times TN}$ in (\ref{Qcan}),  user 1  with the STBC  $\mathbf{S}$ can satisfy the inequality  (\ref{QHpic}) in the following system
 \begin{eqnarray}\label{Yc1cri} \mathbf{Q}_1\mathbf{y}_1=\sqrt\frac{\rho}{\mu}\mathbf{Q}_1\bm{\mathcal{H}}_1 {\mathbf{s}} +\mathbf{Q}_1\mathbf{n}_1,
    \end{eqnarray}
if and only if \emph{ the matrix $\mathbf{Q}_1 \left(\mathbf{I}_N \otimes\Delta \mathbf{S}\right)$   has the full column rank of $MN$},
where  the matrix $\Delta \mathbf{S}\in \mathbb{C}^{T\times M}$ shares the same structure as the code matrix $\mathbf{S}$ and every entry is selected from   the erroneous vector $\Delta \mathbf{s}$.
\end{theorem}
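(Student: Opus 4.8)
The plan is to reduce the channel-dependent inequality (\ref{QHpic}) to a purely algebraic condition on the matrix $\mathbf{Q}_1\left(\mathbf{I}_N\otimes\Delta\mathbf{S}\right)$, and then to recognize full column rank as positive definiteness of an associated Gram matrix. The first and most important step is to establish the vectorization identity
\[
\bm{\mathcal{H}}_1\,\Delta\mathbf{s}=\left(\mathbf{I}_N\otimes\Delta\mathbf{S}\right)\mathbf{h},
\]
where $\mathbf{h}=\left[\mathbf{h}^\mathcal{T}_1\ \mathbf{h}^\mathcal{T}_2\ \ldots\ \mathbf{h}^\mathcal{T}_N\right]^\mathcal{T}$ as defined in the proof of \emph{Lemma~\ref{thpic}}. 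This holds because $\mathbf{y}_1$ is obtained from $\mathbf{Y}_1$ by stacking its columns, so that $\bm{\mathcal{H}}_1\mathbf{s}$ is exactly $\mathrm{vec}(\mathbf{S}\mathbf{H}_1)=\left(\mathbf{I}_N\otimes\mathbf{S}\right)\mathrm{vec}(\mathbf{H}_1)$; since $\Delta\mathbf{S}$ is built from $\Delta\mathbf{s}$ through the same linear code map that produces $\mathbf{S}$ from $\mathbf{s}$, replacing $\mathbf{s}$ by $\Delta\mathbf{s}$ replaces $\mathbf{S}$ by $\Delta\mathbf{S}$ and yields the claimed identity.

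Given this identity, I would set $\mathbf{B}(\Delta\mathbf{s}):=\mathbf{Q}_1\left(\mathbf{I}_N\otimes\Delta\mathbf{S}\right)$ and compute
\[
\|\mathbf{Q}_1\bm{\mathcal{H}}_1\Delta\mathbf{s}\|^2=\|\mathbf{B}(\Delta\mathbf{s})\,\mathbf{h}\|^2=\mathbf{h}^\dag\,\mathbf{B}(\Delta\mathbf{s})^\dag\mathbf{B}(\Delta\mathbf{s})\,\mathbf{h}.
\]
Since $\sum_{i,j}|h_{i,j}|^2=\|\mathbf{h}\|^2$, the inequality (\ref{QHpic}) becomes $\mathbf{h}^\dag\,\mathbf{B}(\Delta\mathbf{s})^\dag\mathbf{B}(\Delta\mathbf{s})\,\mathbf{h}\geq\alpha\|\mathbf{h}\|^2$ for all $\mathbf{h}$. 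Because the Gram matrix $\mathbf{B}^\dag\mathbf{B}$ is Hermitian positive semidefinite, for a fixed $\Delta\mathbf{s}$ such a positive $\alpha$ exists if and only if its smallest eigenvalue is strictly positive, i.e. $\mathbf{B}(\Delta\mathbf{s})^\dag\mathbf{B}(\Delta\mathbf{s})\succ 0$, which is the textbook equivalence to $\mathbf{B}(\Delta\mathbf{s})=\mathbf{Q}_1\left(\mathbf{I}_N\otimes\Delta\mathbf{S}\right)$ having full column rank $MN$.

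For the sufficiency direction, assuming full column rank for every error vector, I would take $\alpha:=\min_{\Delta\mathbf{s}\in\Delta\mathcal{A}_1^L}\lambda_{\min}\!\left(\mathbf{B}(\Delta\mathbf{s})^\dag\mathbf{B}(\Delta\mathbf{s})\right)$. As $\mathcal{A}_1$ is a finite QAM constellation, $\Delta\mathcal{A}_1^L$ is a finite set of nonzero vectors, so this minimum is attained, is strictly positive, and is independent of $\mathbf{h}$ — exactly the constant required in (\ref{QHpic}). For necessity I would argue by contraposition: if $\mathbf{B}(\Delta\mathbf{s})$ is column-rank deficient for some $\Delta\mathbf{s}$, then $\mathbf{B}^\dag\mathbf{B}$ is singular and admits a nonzero null vector $\mathbf{h}_0$, for which the left-hand side of (\ref{QHpic}) vanishes while the right-hand side $\alpha\|\mathbf{h}_0\|^2$ is strictly positive for any $\alpha>0$, so (\ref{QHpic}) cannot hold.

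I expect the only genuine subtlety to be the first step, namely verifying that the equivalent channel $\bm{\mathcal{H}}_1$ acts on $\Delta\mathbf{s}$ precisely as $\left(\mathbf{I}_N\otimes\Delta\mathbf{S}\right)$ acts on $\mathbf{h}$; this rests on the precise (but standard) vectorization convention relating $\mathbf{Y}_1$ to $\mathbf{y}_1$ and on the linearity of the STBC encoding map. Once that identity is secured, the equivalence follows from the standard fact that a Gram matrix is positive definite exactly when its generating matrix has full column rank, together with the finiteness of the constellation to guarantee a uniform positive $\alpha$.
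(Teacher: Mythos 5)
Your proposal is correct and follows essentially the same route as the paper's own proof: the same key identity $\|\mathbf{Q}_1\bm{\mathcal{H}}_1\Delta\mathbf{s}\|^2=\|\mathbf{Q}_1\left(\mathbf{I}_N\otimes\Delta\mathbf{S}\right)\mathbf{h}\|^2$, the same spectral lower bound with $\alpha$ taken as the minimum singular value (the paper via SVD, you via the equivalent Gram-matrix eigenvalue formulation) over the finite error set, and the same null-vector argument for necessity. There is no gap; your treatment of the vectorization identity and of the uniformity of $\alpha$ is, if anything, slightly more explicit than the paper's.
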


\begin{proof}

 1) \emph{Sufficiency:} Firstly, we prove that the \emph{full-rank} condition    is  sufficient  to ensure the inequality (\ref{QHpic}) and thereby user 1 can guarantee full diversity after interference cancellation.
From (\ref{Y1ic}) and (\ref{Y1ice}), it is not hard to obtain that
\begin{eqnarray}\label{inqh1}
\|\mathbf{Q}_1\bm{\mathcal{H}}_1\Delta \mathbf{s}\|^2=\|\mathbf{Q}_1 \left(\mathbf{I}_N \otimes\Delta \mathbf{S}\right) \mathbf{h}\|^2,
\end{eqnarray}
where $\Delta \mathbf{s}$  corresponds to a code matrix $\Delta \mathbf{S}$ and a column vector $\mathbf{h}=\left[\mathbf{h}^\mathcal{T}_1 \ \mathbf{h}^\mathcal{T}_2 \ \ldots \ \mathbf{h}^\mathcal{T}_N\right]^\mathcal{T}$. Both $\Delta \mathbf{s}$ and $\mathbf{h}$ have been defined in \emph{Lemma \ref{thpic}}.

By using singular value decomposition (SVD) of the matrix $\mathbf{Q}_1 \left(\mathbf{I}_N \otimes\Delta \mathbf{S}\right)$, we further obtain that
\begin{eqnarray}\label{mlam}
 \|\mathbf{Q}_1\left(\mathbf{I}_N \otimes\Delta \mathbf{S}\right)\mathbf{h}\|^2&=& {\rm tr}\{\mathbf{h}^\dag\mathbf{U}^\dag\mathbf{\Lambda} \mathbf{U}\mathbf{h}\},
  \end{eqnarray}
where  $\mathbf{\Lambda}={\rm diag}\big\{\left[\lambda_1, \ \lambda_2, \ \ldots,  \lambda_{MN}\right]\big\}$ and $\mathbf{U}$ is a unitary matrix.
Since the matrix $\mathbf{Q}_1\left(\mathbf{I}_N \otimes\Delta \mathbf{S}\right)$   has the full column rank of $MN$, all the singular values $\lambda_i$
are positive.
Let $\lambda_{\min}=\min\big\{\lambda_1,  \lambda_2,  \ldots,  \lambda_{MN} \big\}$
and
$\alpha=\min_{\Delta \mathbf{s}\in \Delta \mathbf{S}}\lambda_{\min}$.
Since the signal constellation $\mathbf{S}$ is finite, the erroneous
set $\Delta \mathbf{S}$ is   finite as well. Since every $\lambda_{\min}>0$,
we conclude that $\alpha>0$.
Thus, $\|\mathbf{Q}_1\left(\mathbf{I}_N \otimes\Delta \mathbf{S}\right)\mathbf{h}\|^2$ can be lower bounded by
\begin{eqnarray}\label{fpepcri}
\nonumber \|\mathbf{Q}_1 \left(\mathbf{I}_N \otimes\Delta \mathbf{S}\right)\mathbf{h}\|^2&\geq& {\rm tr}\big\{\lambda_{{\min}}   (\mathbf{h}^\dag\mathbf{U}^\dag\mathbf{U} \mathbf{h})\big\}\\
 &\geq &  \alpha \sum^{M}_{i=1}\sum^{N}_{j=1}|h_{i,j}|^2,
  \end{eqnarray}
where the second inequality holds since $\mathbf{U}$ is a unitary matrix
and $\alpha$ is the minimum among all $\lambda_{\min}$.
Since   the matrix $\mathbf{Q}_1\left(\mathbf{I}_N \otimes\Delta \mathbf{S}\right)$ is independent of the channel $\bm{\mathcal{H}}_1$ (see \emph{Remark \ref{re2}}) and therefore independent of
the channel $\mathbf{h}$,  we obtain that $\alpha$ is independent of the channel $\mathbf{h}$.
This proves that   user 1  can fulfill the inequality (\ref{QHpic}) under  the group ZF receiver if the \emph{full-rank} condition is satisfied.


  2) \emph{Necessity:}   A proof by contradiction is considered to prove that the \emph{full-rank} condition   is necessary for user 1 to guarantee   (\ref{QHpic}) after the group ZF receiver. We assume that the matrix $\mathbf{Q}_1\left(\mathbf{I}_N \otimes\Delta \mathbf{S}\right)$ is  rank deficient,
     which means that it must exist a nonzero vector $\mathbf{a}=[a_1, a_2, \ldots, a_{MN}]$ with $a_v\in \mathbb{C}$ for $v=1, 2,\ldots, MN$ such that
      \begin{eqnarray}\label{qsv2cri}
a_1 \mathbf{Q}_{1} \Delta  S^{'}_1+a_2 \mathbf{Q}_{1} \Delta  S^{'}_2+\ldots +a_{MN} \mathbf{Q}_{1}\Delta  S^{'}_{MN} = \mathbf{0},
    \end{eqnarray}
    where $\Delta S^{'}_v$ stands for the $v$th column vector of the  matrix $\mathbf{I}_N \otimes\Delta \mathbf{S}$.
  Consequently, it is possible to find  a nonzero $\mathbf{h}=\left[a_1, a_2, \ldots, a_{MN}\right]^\mathcal{T}$ such that  we obtain $\|\mathbf{Q}_1\left(\mathbf{I}_N \otimes\Delta \mathbf{S}\right)\mathbf{h}\|^2=0$, which does not satisfy the inequality (\ref{QHpic}) for any positive constant $\alpha$. Thus,  the assumption contradicts with the condition that  user 1  with the   STBC  $\mathbf{S}$ can fulfill the inequality (\ref{QHpic}), thereby the \emph{full-rank} condition must hold.

  The proof of \emph{Theorem \ref{cri1}} is completed. 
\end{proof}

 By following \emph{Lemma \ref{thpic}},  we  conclude that  user 1 can achieve full diversity of order $MN$ under  the group ZF receiver if the \emph{full-rank} condition in \emph{Theorem \ref{cri1}} is satisfied.

\section{Proposed STBC Design }

 In this section, a systematic STBC design  is proposed for each user and two code examples are presented. Based on the proposed design criterion, we prove that the full diversity can be guaranteed with  our proposed STBC designs.

 \subsection{  STBC Designs}

The STBC designs  $\mathbf{S}$ and $\mathbf{C}$   for user 1 and user 2, respectively, are as follows
 \begin{eqnarray}
 \label{u1m}\mathbf{S}
 =\left[\begin{array}{ccccc}
\tilde{s}_1  &   0 & \ldots  & 0\\
\tilde{s}_2 &  \tilde{s}_1 & \ddots   & \vdots\\
\vdots &  \tilde{s}_2  &  \ddots&   0\\
\tilde{s}_{L} &  \ddots&  \ddots &   \tilde{s}_1 \\
0 & \tilde{s}_{L} &  \ddots &  \tilde{s}_2 \\
\vdots & \ddots&  \ddots  & \vdots\\
0 &  \ldots&  0  &   \tilde{s}_{L}  \\
&&\mathbf{0}_{M\times M}
 \end{array}\right]_{(L+2M-1 )\times M}\\
 {\rm and}\
\mathbf{C}
=\left[\begin{array}{ccccc}
&&\mathbf{0}_{M\times M}\\
\tilde{c}_1  &   0 & \ldots  & 0\\
\tilde{c}_2 &  \tilde{c}_1 & \ddots   & \vdots\\
\vdots &  \tilde{c}_2  &  \ddots&   0\\
\tilde{c}_{L} &  \ddots&  \ddots &   \tilde{c}_1 \\
0 & \tilde{c}_{L} &  \ddots &  \tilde{c}_2 \\
\vdots & \ddots&  \ddots  & \vdots\\
0 &  \ldots&  0  &   \tilde{c}_{L}  \\
 \end{array}\right]_{(L+2M-1 )\times M},
 \end{eqnarray}
where $L$ stands for the number of layers in each codeword. Moreover, $\tilde{s}_l$ and $\tilde{c}_l$ are the $l$th elements of the rotated symbol vectors  $\tilde{\mathbf{s}}$ and $\tilde{\mathbf{c}}$, respectively, which are given by
\begin{eqnarray}\label{rosc1}
 \tilde{\mathbf{s}}&=&\mathbf{\Theta}\mathbf{s}=\left[ \tilde{s}_1, \ \tilde{s}_2, \ldots, \ \tilde{s}_L\right]^\mathcal{T}\\
 \label{rosc2} \ {\rm and}\  \tilde{\mathbf{c}}&=&\mathbf{\Theta}\mathbf{c}=\left[ \tilde{c}_1, \ \tilde{c}_2, \ldots, \ \tilde{c}_L\right]^\mathcal{T},
  \end{eqnarray}
where the rotation matrix $\mathbf{\Theta}$  is given in \cite{fd} and \cite{WangGY}    such that both $\tilde{\mathbf{s}}$ and $\tilde{\mathbf{c}}$   have non-zero product distance, i.e., full rank if they are used to form the diagonal space-time codes \cite{dast}. Similar to \cite{liu2}, the proposed codes are featured as Toeplitz coding structure. However, in order to achieve full diversity after interference cancellation, the independent symbols in each layer of the proposed codes are transformed by a rotation matrix $\mathbf{\Theta}$, which will be specified in the following.


The code rate of an STBC is defined as $\mathcal{R}=\frac{L}{T}$ symbols per channel use, where $L$  independent symbols selected from a finite constellation in the STBC are transmitted over $T$ time slots. Accordingly,
 the code rate of the proposed STBC for each user is $\frac{L}{L+2M-1}$ symbols per channel use with $T=L+2M-1$. For a fixed $M$, the rate approaches $1$ symbol per channel use when   $L$ in the time delay in the encoding (or code block size) goes to infinity.

\begin{remark}[Difference from the multilayer STBC in \cite{wei}]\label{re3}
The multilayer  STBC  $\mathbf{X}$ in \cite[Eq. (13)]{wei} was proposed to achieve full diversity for MIMO systems based on the criterion in \cite{guo}\cite{guocor}. In \cite{wei}, the STBC design  is given by
\begin{eqnarray}\label{it12}
\mathbf{X}= \left[\begin{array}{cccc}
    X_{1,1}  & 0              & \cdots      & 0 \\
    X_{2,1}    &  X_{1,2}  & \ddots      & \vdots \\
     \vdots                &  X_{2,2}    & \ddots      & 0 \\
    X_{L,1}      &  \vdots                 & \ddots      &  X_{1,M} \\
     0           & X_{L,2}       & \ddots      & X_{2,M}\\
    \vdots                 & \ddots                   &\ddots       & \vdots\\
    0             &  \cdots                 &0       & X_{L,M} \\
     \end{array}
  \right],
\end{eqnarray}
where the symbol vector $\left[{X}_{l,1},  \   {X}_{l,1}, \ \ldots,   \ {X}_{l,M}\right]^\mathcal{T}$ in the $l$th layer is given by
  \begin{eqnarray}
\left[\begin{array}{cccc}
 {X}_{l,1},  &   {X}_{l,1}, & \ldots,  & {X}_{l,M}
 \end{array}\right]^\mathcal{T}= \mathbf{\Theta}\mathbf{s}_l,
  \end{eqnarray}
and the transmitted symbol vector $\mathbf{s}_l$ is given by
\begin{eqnarray}\label{it12s}
 \mathbf{s}_l=\left[\begin{array}{cccc}
 {s}_{(l-1)M+1},  &    {s}_{(l-1)M+2}, & \ldots,  &  {s}_{lM}
 \end{array}\right]^\mathcal{T},
  \end{eqnarray}
  for $l=1,2,\ldots, L$.

Similar to our STBC designs in (\ref{u1m}), the   design in (\ref{it12}) has $L$ layers and  the transmitted symbols in each layer is performed by a rotation matrix  $\mathbf{\Theta}$. Different from   (\ref{u1m}),   the code structure of (\ref{it12}) is not featured as a Toeplitz matrix since  the  entries in each layer  are different. With the code design in (\ref{it12}), user 1 may not obtain the full diversity after interference cancellation since \emph{full-rank} condition in \emph{Theorem \ref{cri1}} cannot be satisfied.

\end{remark}

\subsection{  Full Diversity Property}

Now, we   show that each user   can achieve full diversity after interference cancellation with  the proposed STBCs. Without loss of generality, we focus on user 1 only.

\begin{theorem}\label{thpro}
Consider two-user MIMO interference channels where each user has $M$ transmit antennas and each receiver is equipped with $N$ receive antennas.  The  STBC designs in (\ref{u1m}) can obtain full diversity of  order $MN$ under the  group ZF receiver .
\end{theorem}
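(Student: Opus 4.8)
The plan is to reduce the statement to the \emph{full-rank} condition of \emph{Theorem \ref{cri1}} and then verify that condition directly from the staggered Toeplitz structure of the codes. By \emph{Theorem \ref{cri1}} together with \emph{Lemma \ref{thpic}}, it suffices to show that for every nonzero $\Delta\mathbf{s}\in\Delta\mathcal{A}_1^L$ the matrix $\mathbf{Q}_1(\mathbf{I}_N\otimes\Delta\mathbf{S})$ has full column rank $MN$. First I would rephrase this in the non-vectorized domain. Writing an arbitrary vector of $\mathbb{C}^{MN}$ as $\mathbf{w}=\mathrm{vec}(\mathbf{W})$ with $\mathbf{W}\in\mathbb{C}^{M\times N}$, using $(\mathbf{I}_N\otimes\Delta\mathbf{S})\mathbf{w}=\mathrm{vec}(\Delta\mathbf{S}\,\mathbf{W})$, and recalling that $\mathbf{Q}_1$ is the orthogonal projection onto the complement of $\mathbb{V}_2=\mathrm{span}\{\mathcal{G}_1^1,\dots,\mathcal{G}_1^L\}=\mathrm{range}(\bm{\mathcal{G}}_1)$, the kernel condition $\mathbf{Q}_1(\mathbf{I}_N\otimes\Delta\mathbf{S})\mathbf{w}=\mathbf{0}$ is equivalent to $\mathrm{vec}(\Delta\mathbf{S}\,\mathbf{W})\in\mathbb{V}_2$. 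Since $\bm{\mathcal{G}}_1\mathbf{c}=\mathrm{vec}(\mathbf{C}\mathbf{G}_1)$ and $\mathbf{\Theta}$ is a bijection, $\mathbb{V}_2=\{\mathrm{vec}(\mathbf{C}'\mathbf{G}_1)\}$ as $\mathbf{C}'$ ranges over all matrices sharing the Toeplitz support of $\mathbf{C}$. Hence full column rank is equivalent to the implication
\[
\Delta\mathbf{S}\,\mathbf{W}=\mathbf{C}'\mathbf{G}_1 \ \Longrightarrow\ \mathbf{W}=\mathbf{0}
\]
holding for all such $\mathbf{C}'$ and all nonzero $\Delta\mathbf{s}$.

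The key step is to exploit the staggered zero blocks that separate $\mathbf{S}$ and $\mathbf{C}$. The code $\mathbf{C}$ carries the block $\mathbf{0}_{M\times M}$ in its \emph{top} $M$ rows, so every admissible $\mathbf{C}'$, and therefore every product $\mathbf{C}'\mathbf{G}_1$, vanishes on rows $1,\dots,M$: user~2 is silent during the first $M$ symbol periods. Restricting the identity $\Delta\mathbf{S}\,\mathbf{W}=\mathbf{C}'\mathbf{G}_1$ to those first $M$ rows annihilates the right-hand side and leaves
\[
\Delta\mathbf{S}_{\mathrm{top}}\,\mathbf{W}=\mathbf{0},
\]
where $\Delta\mathbf{S}_{\mathrm{top}}\in\mathbb{C}^{M\times M}$ is the leading $M\times M$ block of $\Delta\mathbf{S}$. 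From the Toeplitz form in (\ref{u1m}), $\Delta\mathbf{S}_{\mathrm{top}}$ is lower triangular with the constant $\Delta\tilde{s}_1$ on its entire main diagonal. I would stress that this reduction is purely structural and holds for every realization of $\mathbf{G}_1$.

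Finally I would invoke the full-diversity property of the rotation $\mathbf{\Theta}$. Because $\mathbf{\Theta}$ is chosen (as in the diagonal algebraic space--time construction of \cite{fd}\cite{WangGY}\cite{dast}) so that $\tilde{\mathbf{s}}=\mathbf{\Theta}\mathbf{s}$ has non-vanishing product distance, every nonzero $\Delta\mathbf{s}$ yields $\prod_{l}\Delta\tilde{s}_l\neq0$, and in particular $\Delta\tilde{s}_1\neq0$. Thus $\Delta\mathbf{S}_{\mathrm{top}}$ is lower triangular with nonzero diagonal, hence invertible, and $\Delta\mathbf{S}_{\mathrm{top}}\mathbf{W}=\mathbf{0}$ forces $\mathbf{W}=\mathbf{0}$. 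This establishes the implication above, so $\mathbf{Q}_1(\mathbf{I}_N\otimes\Delta\mathbf{S})$ has full column rank $MN$ for every nonzero $\Delta\mathbf{s}$; \emph{Theorem \ref{cri1}} and \emph{Lemma \ref{thpic}} then deliver full diversity of order $MN$. I expect the main obstacle to be the first paragraph: correctly identifying $\mathbb{V}_2$ with $\{\mathrm{vec}(\mathbf{C}'\mathbf{G}_1)\}$ and translating the projection-kernel condition into the clean matrix equation $\Delta\mathbf{S}\,\mathbf{W}=\mathbf{C}'\mathbf{G}_1$. Once the problem is cast this way, the observation that user~2 is silent during the first $M$ periods makes the triangular-block argument almost immediate, with $\mathbf{\Theta}$ supplying the nonzero diagonal that guarantees invertibility.
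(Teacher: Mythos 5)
Your proposal is correct and follows essentially the same route as the paper's own proof: reduce to the full-rank criterion of \emph{Theorem \ref{cri1}}, absorb the rotation $\mathbf{\Theta}$ into the interference subspace (so $\mathbf{Q}_1$ depends only on the unrotated Toeplitz channel matrices), and then exploit the two facts that user 2 is silent during the first $M$ time slots while $\Delta\tilde{s}_1\neq 0$ by the non-vanishing product distance of $\mathbf{\Theta}$. The only difference is presentational: you de-vectorize the kernel condition into the single matrix equation $\Delta\mathbf{S}\,\mathbf{W}=\mathbf{C}'\mathbf{G}_1$ and conclude from invertibility of the leading lower-triangular block $\Delta\mathbf{S}_{\mathrm{top}}$, whereas the paper phrases the identical observation in the stacked domain, arguing per receive-antenna group that no nonzero linear combination of the columns of $\mathbf{I}_N\otimes\Delta\mathbf{S}$ can lie in $\underline{\mathbb{V}}_2$; your formulation is arguably cleaner but rests on exactly the same structural insight.
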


 \begin{proof}

 To prove that   user 1 with the proposed STBC $\mathbf{S}$ in (\ref{u1m})   obtains full diversity,
 we   show that $\mathbf{Q}_1\left(\mathbf{I}_N \otimes\Delta \mathbf{S}\right)$ has full  rank of $MN$ by following \emph{Lemma \ref{thpic}} and \emph{Theorem \ref{cri1}}.

 From the code structure in (\ref{u1m}), we have that
 \begin{align}\label{newS}
&\nonumber \Delta \mathbf{S}= \left[ \Delta S_1 \ \Delta S_2 \ \ldots \ \Delta S_M \right]\\
&=\left[\begin{array}{ccccc}
\Delta\tilde{s}_1  &   0 & \ldots  & 0\\
\Delta\tilde{s}_2 & \Delta \tilde{s}_1 & \ddots   & \vdots\\
\vdots & \Delta \tilde{s}_2  &  \ddots&   0\\
\Delta \tilde{s}_{L} &  \ddots&  \ddots &   \Delta \tilde{s}_1 \\
0 & \Delta \tilde{s}_{L} &  \ddots &  \Delta \tilde{s}_2 \\
\vdots & \ddots&  \ddots  & \vdots\\
0 &  \ldots&  0  &  \Delta \tilde{s}_{L}  \\
&&\mathbf{0}_{M\times M}
 \end{array}\right]_{(L+2M-1 )\times M},
  \end{align}
where $\Delta S_i $ is the $i$th column vector of $\Delta \mathbf{S}$  for $i=1,2,\ldots, M$  and
 \begin{eqnarray}\label{dds}
 \Delta \tilde{\mathbf{s}}=\mathbf{\Theta}\Delta \mathbf{s}=\left[\Delta \tilde{s}_1, \ \Delta \tilde{s}_2, \ \ldots, \Delta \tilde{s}_L \right]^\mathcal{T},
 \end{eqnarray}
where $\Delta \tilde{{s}}_l \neq 0, 1\leq l\leq L$, from the design of the rotation matrix $\mathbf{\Theta}$ in \cite{fd} and \cite{WangGY}. Generally, the equivalent channel matrix   of a linear dispersion STBC is   expressed as a stack version of the equivalent channel matrices of all individual receive antennas\cite{has}. Hence, the equivalent channel matrix $\bm{\mathcal{G}}_1$ is given by
\begin{eqnarray}\label{geq1}
\bm{\mathcal{G}}_1=\left[\bm{\mathcal{G}}^{\mathcal{T}}_{1,1} \ \bm{\mathcal{G}}^{\mathcal{T}}_{1,2} \ \ldots \bm{\mathcal{G}}^{\mathcal{T}}_{1,N}\right]^{\mathcal{T}},
\end{eqnarray}
where $\bm{\mathcal{G}}_{1,j}$ denotes the equivalent channel matrix of user 2 corresponding to the $j$th receive antenna. From  (\ref{Y1ice}), (\ref{u1m}) and (\ref{rosc2}), we have
  \begin{eqnarray}\label{theeq}
\bm{\mathcal{G}}_{1,j} =  \bm{\underline{\mathcal{G}}}_{1,j} \mathbf{\Theta},
\end{eqnarray}
with
  \begin{eqnarray}
 \label{eqLgg}   \nonumber &&\underline{\bm{\mathcal{G}}}_{1,j}=\left[\underline{\mathcal{G}}^1_{1,j} \ \underline{\mathcal{G}}_{1,j}^2 \ \ldots \ \underline{\mathcal{G}}_{1,j}^{L} \right]\\
 &&\nonumber=\left[\begin{array}{cccc }
    &&\mathbf{0}_{M\times L}\\
 {g}_{1,j}  &   0 & \ldots  & 0\\
 {g}_{2,j} &   {g}_{1,j} & \ddots   & \vdots\\
\vdots &   {g}_{2,j}  &  \ddots&   0\\
 {g}_{M,j} &  \ddots&  \ddots &    {g}_{1,j} \\
0 &  {g}_{M,j} &  \ddots &   {g}_{2,j} \\
\vdots & \ddots&  \ddots  & \vdots\\
0 &  \ldots&  0  &    {g}_{M,j}  \\
 \end{array}\right]_{(L+2M-1 )\times L},
    \end{eqnarray}
   where $\underline{\mathcal{G}}_{1,j}^l$  denotes the $l$th column vector of $\bm{\underline{\mathcal{G}}}_{1,j}$ for $l=1,2,\ldots, L$. By substituting (\ref{theeq})  into (\ref{geq1}), the equivalent channel $\bm{\mathcal{G}}_{1}$ is rewritten as
      \begin{eqnarray}\label{theeq1}
\bm{\mathcal{G}}_{1} =  \left[ \underline{\bm{\mathcal{G}}}_{1,1}^\mathcal{T} \ \underline{\bm{\mathcal{G}}}_{1,2}^\mathcal{T} \ \ldots \underline{\bm{\mathcal{G}}}_{1,N}^\mathcal{T}\right]^\mathcal{T} \mathbf{\Theta},
\end{eqnarray}
where we let $\bm{\underline{\mathcal{G}}}_{1}=\left[ \underline{\bm{\mathcal{G}}}_{1,1}^\mathcal{T} \ \underline{\bm{\mathcal{G}}}_{1,2}^\mathcal{T} \ \ldots \underline{\bm{\mathcal{G}}}_{1,N}^\mathcal{T}\right]^\mathcal{T}$.

From (\ref{Qcan}), the cancellation matrix $\mathbf{Q}_{1}$   is given by
     \begin{eqnarray}
\nonumber \mathbf{Q}_{1}&=&\mathbf{I}_{TN}-\bm{{\mathcal{G}}}_{1}( {\bm{{\mathcal{G}}}_{1}}^\dag\bm{{\mathcal{G}}}_{1})^{-1}{{\bm{\mathcal{G}}}_{1}}^\dag,\\
 \label{newq}&=&\mathbf{I}_{TN}-\bm{\underline{\mathcal{G}}}_{1}( {\bm{\underline{\mathcal{G}}}_{1}}^\dag\bm{\underline{\mathcal{G}}}_{1})^{-1}{\underline{\bm{\mathcal{G}}}_{1}}^\dag.
    \end{eqnarray}
where the second equality  holds since   $\mathbf{\Theta}$ is a full rank matrix \cite{fd} and \cite{WangGY}. In this scenario, $\mathbf{Q}_{1}$  can be regarded as  a projection matrix that projects a vector in $\mathbb{C}$ onto the orthogonal complementary subspace of $\underline{\mathbb{V}}_2$. The subspace $\underline{ \mathbb{V}}_2$ is  spanned by all the column vectors $\underline{{\mathcal{G}}}^l_1$ of $\bm{\underline{\mathcal{G}}}_{1}$ for $l=1,2,\ldots,L$.

Then, we are ready to check whether the matrix $\mathbf{Q}_1 \left(\mathbf{I}_N \otimes\Delta \mathbf{S}\right)$  has full rank. For a clear explanation,  the following matrix is used to see the relationship between $\left(\mathbf{I}_N \otimes\Delta \mathbf{S}\right)$ and $\underline{\bm{\mathcal{G}}}_1$
\begin{eqnarray}\label{resg}
&&\nonumber \left[\mathbf{I}_N \otimes\Delta \mathbf{S} \  \  \underline{\bm{\mathcal{G}}}_1 \right]\\
\nonumber &=&\left[\begin{array}{cccccccc }
 \Delta{S}^{'}_1  &   \Delta{S}^{'}_2 & \ldots  & \Delta{S}^{'}_{MN}    &  \underline{ {\mathcal{G}}}^1_{1} & \underline{ {\mathcal{G}}}_{1}^2 & \ldots & \underline{ {\mathcal{G}}}_{1}^{L}       \\
 \end{array}\right]\\
&=&\left[\begin{array}{ccccc}
\Delta \mathbf{S}  &    \mathbf{0} & \ldots  & \mathbf{0}   & \underline{\bm{\mathcal{G}}}_{1,1} \\
 \mathbf{0} & \Delta \mathbf{S} & \ddots   & \vdots & \underline{\bm{\mathcal{G}}}_{1,2} \\
\vdots  & \ddots   &  \ddots&   \mathbf{0} & \vdots\\
 \mathbf{0} & \ldots   & \mathbf{0} &   \Delta \mathbf{S}& \underline{\bm{\mathcal{G}}}_{1,N}\\
 \end{array}\right],
 \end{eqnarray}
where $\Delta{S}^{'}_r$ denotes the $r$th column vector in the matrix $\mathbf{I}_N \otimes\Delta \mathbf{S}$ for $r=1,2,\ldots,MN$. We can observe that the vector groups $\left[\Delta{S}^{'}_{(j-1)M+1} \ \Delta{S}^{'}_{(j-1)M+2} \ \ldots \ \Delta{S}^{'}_{jM}\right]$ for $j=1,2,\ldots, N$   are orthogonal to each other. Motivated by the observation, in order to prove the full rank of the matrix $\mathbf{Q}_1 \left(\mathbf{I}_N \otimes\Delta \mathbf{S}\right)$, it is   equivalent to show  that a linear combination over $\mathbb{C}$ of all vectors in the $j$th group $\left[\Delta{S}^{'}_{(j-1)M+1} \ \Delta{S}^{'}_{(j-1)M+2} \ \ldots \ \Delta{S}^{'}_{jM}\right]$ with $\forall j\in\{1, 2, \ldots, N\}$ does not belong to the subspace $\underline{\mathbb{V}}_2$.

Without loss of generality, we  consider the $j$th vector group. It is noticed that all the rows in this group are all zeros except the $(j-1)(L+2M-1)+1$th
$\sim j(L+2M-1)$th rows. From  (\ref{newS}) and (\ref{eqLgg}), we have
\begin{align}
 &\nonumber \left[\mathbf{\Delta \mathbf{S}} \ | \ \underline{\bm{\mathcal{G}}}_{1,j} \right]\\
&=\nonumber  \left[\begin{array}{cccccccc }
 \Delta{S}_1  &   \Delta{S}_2& \ldots  & \Delta{S}_M    &  \underline{\mathcal{G}}^1_{1,j} & \underline{\mathcal{G}}_{1,j}^2 & \ldots & \underline{\mathcal{G}}_{1,j}^{L}       \\
 \end{array}\right]\\
 &=\label{dsg1}\left[\begin{array}{cccccccc }
 \Delta{\tilde{s}}_1   & &  &             \\
 \Delta{\tilde{s}}_2   & \ddots   &  & & \mathbf{0}_{M\times L}\\
\vdots       &  \ddots&   \Delta{\tilde{s}}_1\\
\Delta{\tilde{s}}_{M}   &  \ddots&  \Delta{\tilde{s}}_2& {g}_{1,j}    & &\\
  \vdots &   \ddots &     \vdots & {g}_{2,j}  & \ddots   &   \\
  \Delta{\tilde{s}}_{L} &    \ddots &    \Delta{\tilde{s}}_{M} & \vdots &  \ddots &   {g}_{1,j}\\
  &    \ddots  &  \vdots & {g}_{M,j} &     \ddots   &  {g}_{2,j}\\
  &      &     \Delta{\tilde{s}}_{L} &  &    \ddots&   \vdots \\
  &  \mathbf{0}_{M\times M}&&   &     &    {g}_{M,j}\\
  & &  &  &    &
 \end{array}\right]
    \end{align}
 where   $\Delta S_i$ denotes the $i$th column vector of $\mathbf{\Delta \mathbf{S}}$ for $i=1,2,\ldots,M$. Note that the linear combination of all the  column vectors in $\Delta{\mathbf{S}}$ does not belong to the subspace spanned by the column vectors in $\underline{\bm{\mathcal{G}}}_{1,j}$.
The fact   holds since  the entry $\Delta \tilde{s}_1$  is nonzero for any nonzero vector $\Delta \tilde{\mathbf{s}}$ in (\ref{dds}). The second equality in (\ref{dsg1}) is given when $L>M$ and  the fact can be easily obtained  when $L\leq M$.

Therefore, any nonzero linear combination over $\mathbb{C}$ of  the column vectors in  $\mathbf{I}_N \otimes\Delta \mathbf{S}$ does not belong to the subspace $ {\underline{\mathbb{V}}}_2$ spanned by the column vectors of $ \underline{{\bm{\mathcal{G}}}}_{1}$, i.e.,
\begin{eqnarray}\label{sv2}
a_1 \Delta  S^{'}_1+a_2 \Delta  S^{'}_2+\ldots +a_{MN} \Delta  S^{'}_{MN}  \nsubseteq \underline{\mathbb{V}}_2,
    \end{eqnarray}
for not all zero $ a_v \in \mathbb{C}, v=1,2,\ldots, MN$.
Then, it follows that
 \begin{eqnarray}\label{qsv2}
a_1 \mathbf{Q}_{1} \Delta  S^{'}_1+a_2 \mathbf{Q}_{1} \Delta  S^{'}_2+\ldots +a_{MN} \mathbf{Q}_{1}\Delta  S^{'}_{MN} \neq \mathbf{0}.
    \end{eqnarray}
Therefore,  all   column vectors of the matrix $\mathbf{Q}_1 \left(\mathbf{I}_N \otimes\Delta \mathbf{S}\right)$ are linear independent over $\mathbb{C}$, which justifies that the matrix has the full column rank of $MN$. By following \emph{Theorem \ref{cri1}}, user 1 with the proposed STBC $\mathbf{S}$ in (\ref{u1m}) retains the full diversity after interference cancellation.

  The proof of \emph{Theorem \ref{thpro}} is completed.

\end{proof}

\subsection{Code Examples}

1) \emph{$M=2$ and $L=4$}

   The proposed STBC designs     for $M=2$ and $L=4$ are given as follows
\begin{eqnarray}\label{m2l4}
\mathbf{S} = \left[\begin{array}{cc}
   \tilde{s}_1 & 0\\
    \tilde{s}_2 & \tilde{s}_1\\
    \tilde{s}_3 & \tilde{s}_2\\
    \tilde{s}_4 & \tilde{s}_3\\
            0   & \tilde{s}_4\\
   0 & 0\\
   0 & 0
  \end{array}
  \right] {\rm and}\
  \mathbf{C} = \left[\begin{array}{cc}
    0 & 0\\
   0 & 0\\
   \tilde{c}_1 & 0\\
    \tilde{c}_2 & \tilde{c}_1\\
    \tilde{c}_3 & \tilde{c}_2\\
    \tilde{c}_4 & \tilde{c}_3\\
            0   & \tilde{c}_4\\
  \end{array}
  \right].
\end{eqnarray}

According to (\ref{rosc1}) and  (\ref{rosc2}), the symbol vectors $\mathbf{\tilde{s}}$ and $\mathbf{\tilde{c}}$  are given by
\begin{eqnarray}\label{laym2l4}
  \nonumber \tilde{\mathbf{s}}  &=& [\tilde{s}_{1}, \  \tilde{s}_{2}, \ \tilde{s}_{3}, \ \tilde{s}_{4}]^\mathcal{T}
= \mathbf{\Theta} [s_{ 1},\ s_{ 2}, \ s_{ 3},\ s_{ 4}]^\mathcal{T}\\ \nonumber {\rm and} \
\tilde{\mathbf{c}}  &=& [\tilde{c}_{1}, \  \tilde{c}_{2}, \ \tilde{c}_{3}, \ \tilde{c}_{4}]^\mathcal{T}
= \mathbf{\Theta} [c_{ 1},\ c_{ 2},\ c_{ 3},\ c_{ 4}]^\mathcal{T},
\end{eqnarray}
where the matrix $\mathbf{\Theta}$ with size of $  {4\times 4}$ is selected from \cite{fd} and the code rate is  $4/7$ symbol per channel use.

2) \emph{$M=3$ and $L=4$}

The code examples     for $M=3$ and $L=4$ are given as follows
\begin{eqnarray}\label{m3l4}
\mathbf{S} = \left[\begin{array}{ccc}
   \tilde{s}_1 & 0 & 0\\
    \tilde{s}_2 & \tilde{s}_1& 0\\
    \tilde{s}_3 & \tilde{s}_2& \tilde{s}_1\\
    \tilde{s}_4 & \tilde{s}_3& \tilde{s}_2\\
            0   & \tilde{s}_4& \tilde{s}_3\\
   0 & 0& \tilde{s}_4\\
   0 & 0&0\\
   0 & 0&0\\
   0 & 0&0
  \end{array}
  \right] {\rm and}\
  \mathbf{C} = \left[\begin{array}{ccc}
    0 & 0&0\\
   0 & 0&0\\
   0 & 0&0\\
      \tilde{c}_1 & 0 & 0\\
    \tilde{c}_2 & \tilde{c}_1& 0\\
    \tilde{c}_3 & \tilde{c}_2& \tilde{c}_1\\
    \tilde{c}_4 & \tilde{c}_3& \tilde{c}_2\\
            0   & \tilde{c}_4& \tilde{c}_3\\
            0   & 0& \tilde{c}_4\\
  \end{array}
  \right].
\end{eqnarray}

The symbol vectors $\mathbf{\tilde{s}}$ and $\mathbf{\tilde{c}}$  are given by
\begin{eqnarray}\label{laym3l4}
  \nonumber \tilde{\mathbf{s}}  &=& [\tilde{s}_{1}, \  \tilde{s}_{2}, \ \tilde{s}_{3}, \ \tilde{s}_{4}]^\mathcal{T}
= \mathbf{\Theta} [s_{ 1},\ s_{ 2}, \ s_{ 3},\ s_{ 4}]^\mathcal{T}\\ \nonumber {\rm and} \
\tilde{\mathbf{c}}  &=& [\tilde{c}_{1}, \  \tilde{c}_{2}, \ \tilde{c}_{3}, \ \tilde{c}_{4}]^\mathcal{T}
= \mathbf{\Theta} [c_{ 1},\ c_{ 2},\ c_{ 3},\ c_{ 4}]^\mathcal{T},
\end{eqnarray}
where the matrix $\mathbf{\Theta}$ with size of $  {4\times 4}$ is selected from \cite{fd}. The code rate is  $4/9$ symbol per channel use.

\section{Simulation Results}

In this section, the average bit error rate (BER) of the proposed STBC in (\ref{u1m})  with the group ZF  receiver  is simulated. We consider a two-user interference channel model with $M$ transmit antennas at each user and $N$ receive antennas at each receiver. The maximum achievable diversity order for each user in this system is $MN$.

Fig. \ref{fig2} depicts  BER  performance of the   proposed codes with different number of transmit antennas and single receive antenna.
  $4$QAM is used for two users. The dashed lines with squares and dots  are plotted as the references with diversity order $2$ and $3$, respectively. $C_1$ and $C_2$ are two positive constants. As expected, all types of the proposed  STBCs can achieve full diversity of order $M$ under the group ZF receiver.  For a fixed $M$, the coding gain of our proposed STBC is degraded  with the increase of the number of layers, since more interference are produced by the symbols in the extra layers.

\begin{figure}
      \centering \scalebox{0.519} {\includegraphics{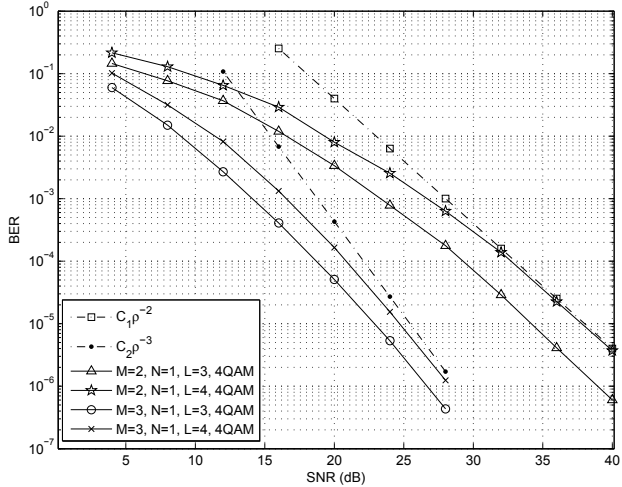}}
       \caption{Performance of the proposed codes for user 1 with $M$ transmit antennas, $1$ receive antenna and $L$ layers in two-user interference channels.}
        \label{fig2}
\end{figure}
%
%

Fig. \ref{fig4} presents BER performance comparison of user 1 with the proposed STBC when it is decoded by  the group ZF receiver  and the asymptotic optimal (AO) receiver (or MMSE receiver) \cite[\emph{Section V}]{guo}.   4QAM is used for both of the cancellation schemes. The dashed lines with squares and dots   are used as  the references with diversity order $2$ and $3$, respectively. $C_3$ and $C_4$ are two positive constants. It can be observed that user 1 with our proposed STBC is capable of  achieving full diversity of order $M$ under both group ZF and AO/MMSE receivers. Since the performance of  the AO receiver  always outperforms the group ZF receiver, the proposed codes  can also achieve the full diversity as evidenced from this simulation.
Moreover, when both $M$ and $L$ are fixed, the coding gain  with the AO receiver is about 1dB better than the one with the group ZF receiver.

\begin{figure}
        \centering \scalebox{0.522} {\includegraphics{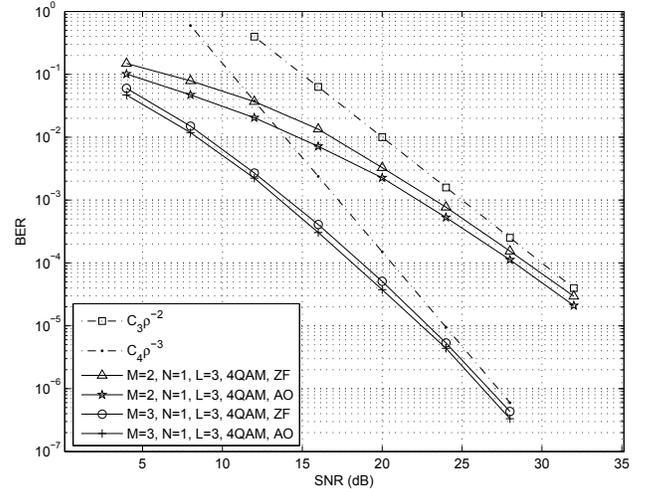}}
       \caption{Performance of the proposed codes for user 1 with the group ZF and AO receivers in two-user interference channels.}
        \label{fig4}
\end{figure}

\section{Conclusion}

In this paper,   a design criterion of STBC design was proposed for two-user MIMO interference channels under the group ZF receiver.  Based on the criterion, we proposed a systematic STBC for each user to obtain full diversity after interference cancellation. The proposed STBC approaches a code rate of  one symbol per channel use when the time delay in the encoding (or code block size) gets large.  Simulation results were presented to show that our proposed STBCs can achieve full diversity under the group ZF receiver.

\begin{biography}[{\includegraphics[width=1.1in,height=1.3in,clip,keepaspectratio]{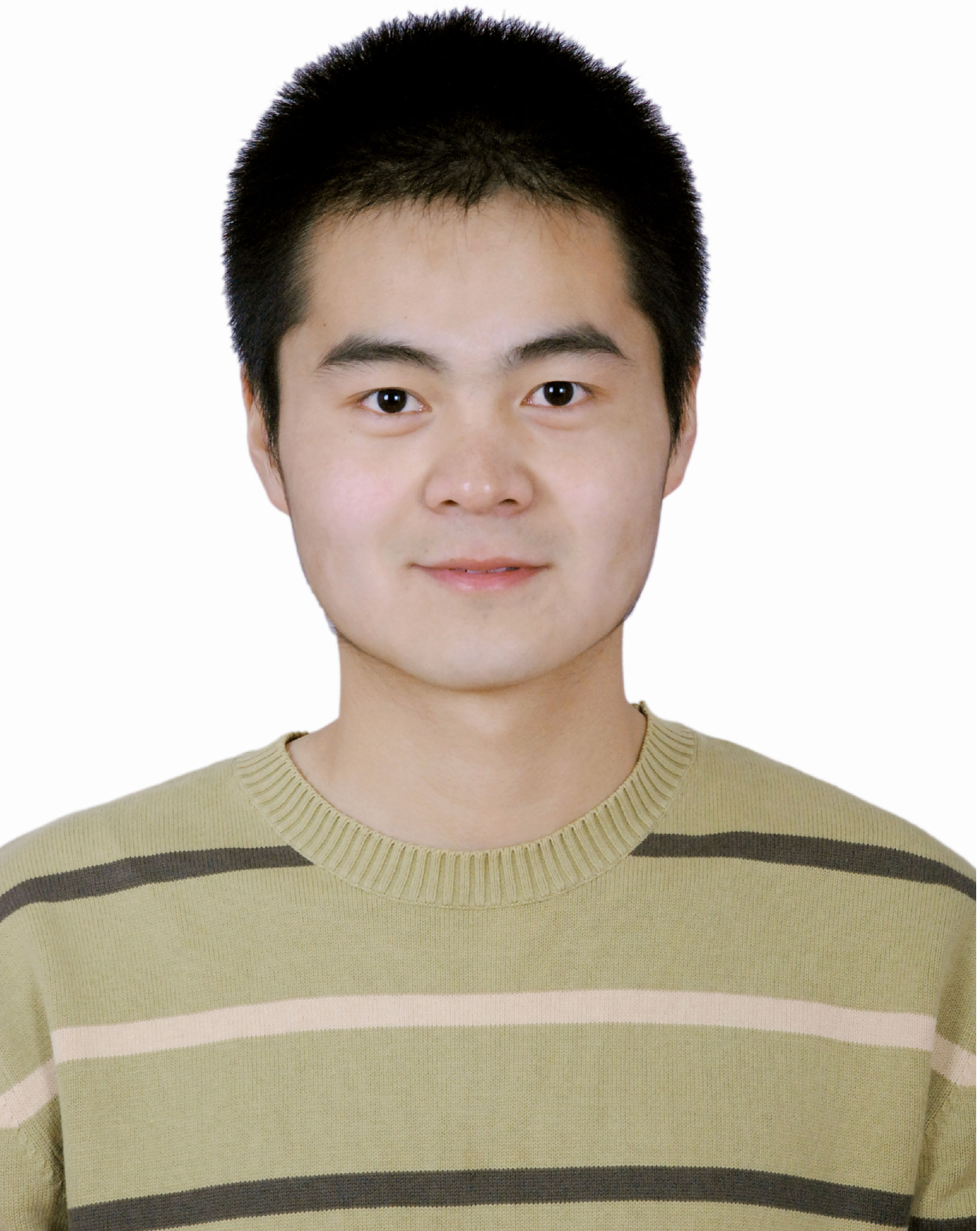}}]
{Long Shi} (S'10) received his B.S. degree in electrical
engineering from Changchun Institute of Technology, Jilin, China, in 2007 and M.S. degree in telecommunication engineering from Jilin University, Jilin, China,  in 2009.
He was a visiting
student at The Chinese University of Hong Kong and University of Delaware in 2010 and 2011, respectively. He is currently working toward his Ph.D. degree in electrical engineering at The
University of New South Wales, Sydney, Australia. His current research
interests include   space-time coding in MIMO systems, cooperative systems, and interference channels.
\end{biography}
\begin{biography}[{\includegraphics[width=1in,height=1.25in,clip,keepaspectratio]{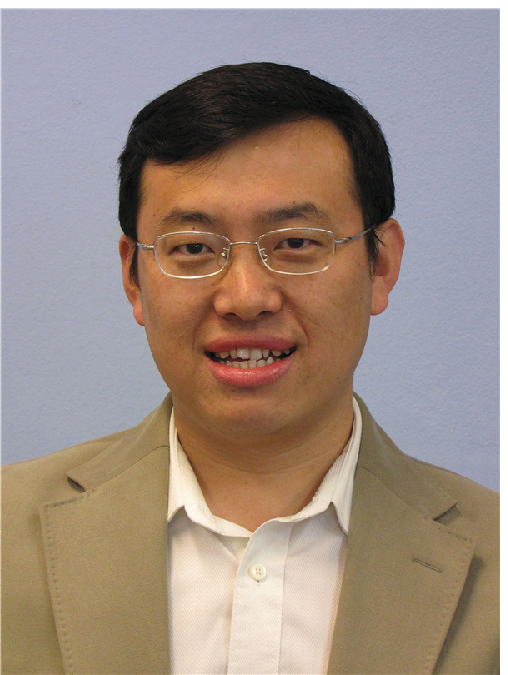}}]
{Wei Zhang} (S'01-M'06-SM'11) received the Ph.D. degree in electronic
engineering from The Chinese University of Hong Kong in 2005. He was
a Research Fellow with the Department of Electronic and Computer
Engineering, Hong Kong University of Science and Technology, during
2006-2007. From 2008, he has been with the School of Electrical
Engineering and Telecommunications, The University of New South
Wales, Sydney, Australia, where he is an Associate Professor. His current
research interests include cognitive radio, cooperative
communications, space-time coding, and multiuser MIMO.

He received the best paper award at the 50th IEEE Global
Communications Conference (GLOBECOM), Washington DC, in 2007 and the
IEEE Communications Society Asia-Pacific Outstanding Young
Researcher Award in 2009. He is Co-Chair of Communications Theory
Symposium of International Conference on Communications (ICC 2011),
Kyoto, Japan. He is an Editor of the IEEE Transactions on Wireless
Communications and an Editor of the IEEE Journal on Selected Areas
in Communications (Cognitive Radio Series).
\end{biography}
\begin{biography}[{\includegraphics[width=1.1in,height=1.3in,clip,keepaspectratio]{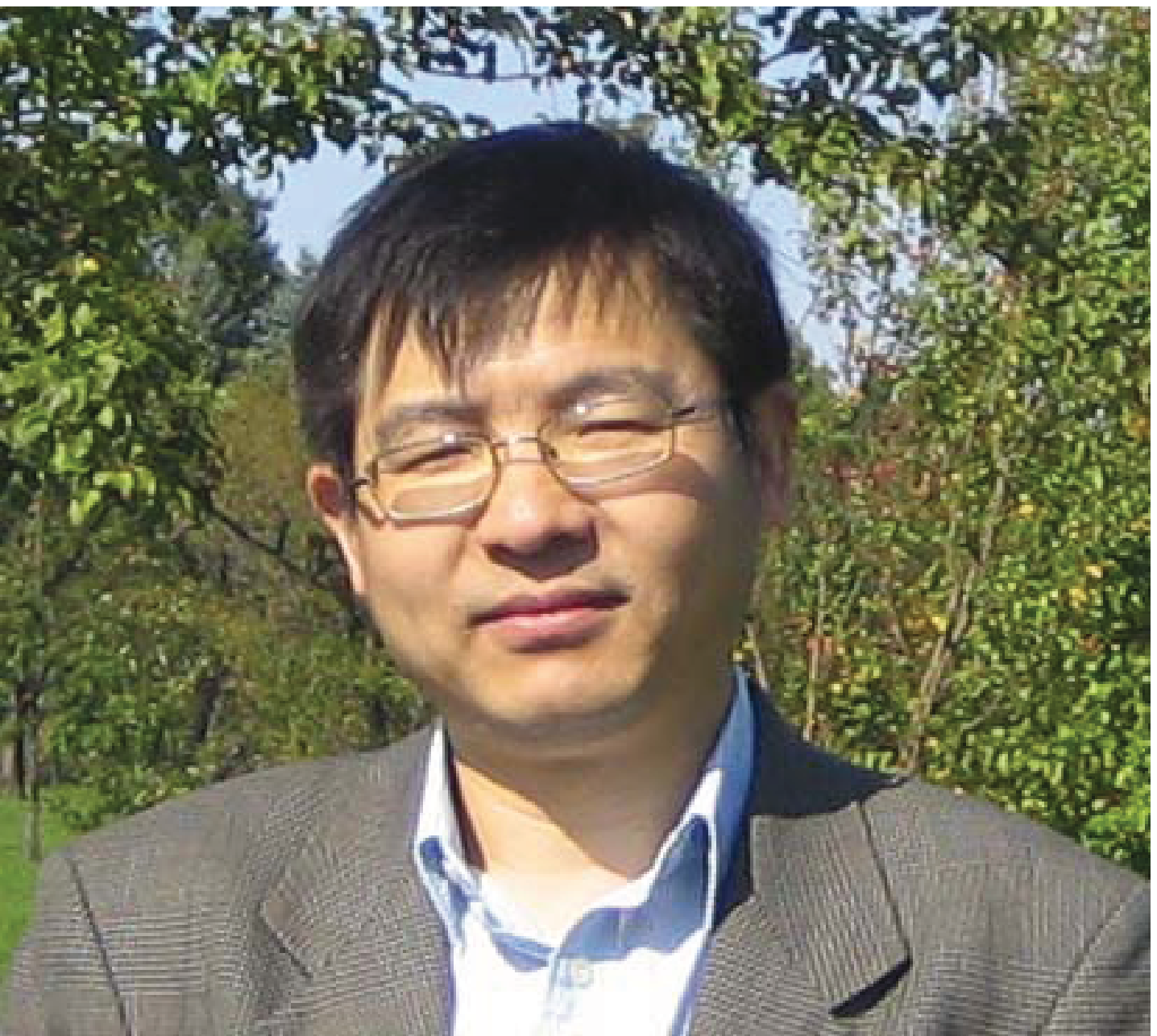}}]{Xiang-Gen Xia}
(M'97-SM'00-F'09) received his B.S. degree in mathematics from Nanjing Normal University, Nanjing, China, and his M.S. degree in mathematics from Nankai University, Tianjin, China, and his Ph.D. degree in Electrical Engineering from the University of Southern California, Los Angeles, in 1983, 1986, and 1992, respectively.

He was a Senior/Research Staff Member at Hughes Research Laboratories, Malibu, California, during 1995-1996. In September 1996, he  joined the Department of Electrical and Computer Engineering, University of Delaware, Newark, Delaware, where he is the Charles Black Evans Professor. His current research interests include space-time coding, MIMO and OFDM systems, digital signal processing, and SAR and ISAR imaging. Dr. Xia has  over 240 refereed journal articles published and accepted, and 7 U.S. patents awarded and is the author of the book  Modulated Coding for Intersymbol Interference Channels (New York, Marcel Dekker, 2000).

Dr. Xia received the National Science Foundation (NSF) Faculty Early Career Development (CAREER) Program Award in 1997, the Office of Naval Research (ONR) Young Investigator Award in 1998, and the Outstanding Overseas Young Investigator Award from the National Nature Science Foundation of China in 2001. He also received the Outstanding Junior Faculty Award of the Engineering School of the University of Delaware in 2001. He is currently an Associate Editor of the IEEE Transactions on Wireless Communications, IEEE Transactions on Signal Processing, Signal Processing (China), and the Journal of Communications and Networks (JCN). He was a guest editor of Space-Time Coding and Its Applications in the EURASIP Journal of Applied Signal Processing in 2002. He served as an Associate Editor of the IEEE Transactions on Signal Processing during 1996 to 2003, the IEEE Transactions on Mobile Computing during 2001 to 2004, IEEE Transactions on Vehicular Technology during 2005 to 2008, the IEEE Signal Processing Letters during 2003 to 2007, Signal Processing (EURASIP) during 2008 to 2011, and the EURASIP Journal of Applied Signal Processing during 2001 to 2004. Dr. Xia served as a Member of the Signal Processing for Communications Committee from 2000 to 2005 and  a Member of the Sensor Array and Multichannel (SAM) Technical Committee from 2004 to 2009 in the IEEE Signal Processing Society. He serves as IEEE Sensors Council Representative of IEEE Signal Processing Society (from 2002) and served as the Representative of IEEE Signal Processing Society to the Steering Committee for IEEE Transactions on Mobile Computing during 2005 to 2006. Dr. Xia is Technical Program Chair of the Signal Processing Symp., Globecom 2007 in Washington D.C. and  the General Co-Chair of ICASSP 2005 in Philadelphia.

\end{biography}

\end{document}